\documentclass{article}
\usepackage[utf8x]{inputenc}
\usepackage[english]{babel}
\usepackage{xcolor}
\usepackage{float}
\usepackage[all,cmtip]{xy} 
\usepackage{wrapfig}
\usepackage[all]{xy}
\usepackage{relsize}
\usepackage{bbm}
\usepackage{symate}
\usepackage{enumitem}
\usepackage[english]{varioref}
\usepackage[nowrite]{frontespizio}
\renewcommand{\theta}{\vartheta}
\renewcommand{\phi}{\varphi}
\newtheorem{theorem}{Theorem}[section]
\newtheorem{proposition}{Proposition}[section]

\newtheorem{rem}{Remark}[section]
\newtheorem{definition}{Definition}[section]
\def\cA{\mathcal{A}}
\def\cambi#1{{#1}}

\title{Nekhoroshev theorem for perturbations of the central motion}
\author{D. Bambusi\footnote{Dipartimento di Matematica, Universit\`a degli Studi di Milano, Via Saldini 50, I-20133
Milano. \newline
 \textit{Email: } \texttt{dario.bambusi@unimi.it}},
 A. Fus\`e\footnote{Dipartimento di Matematica, Universit\`a degli Studi di Milano, Via Saldini 50, I-20133
Milano. \newline
 \textit{Email: } \texttt{alessandra.fuse@unimi.it}}}

\begin{document}

\maketitle

\begin{abstract}
In this paper we prove a Nekhoroshev type theorem for perturbations of Hamiltonians describing a particle subject to the force due to a central potential. Precisely, we prove that under an explicit condition on the potential, the Hamiltonian of the central motion is quasi-convex. Thus, when it is perturbed, two actions (the modulus of the total angular momentum and the action of the reduced radial system) are approximately conserved for times which are exponentially long with the inverse of the perturbation parameter.
\end{abstract}

\section{Introduction}\label{intro}

In this paper we study the applicability of Nekhoroshev's theorem
\cite{Nek1,Nek2} (see also
\cite{BGG85,lochak1992canonical,GM96,N04,N06,guzzo2016steep}) to the
central motion. The main point is that Nekhoroshev's theorem applies
to perturbations of integrable systems whose Hamiltonian is a
\emph{steep} function of the actions. Even if such a property is known
to be generic, it is very difficult (and not at all explicit) to
verify it. Here we prove that, under an explicit condition on the
potential (see eq. \eqref{PotCondition}), the Hamiltonian of the
central motion is a quasi-convex function of the actions and thus it
is steep, so that Nekhoroshev's theorem applies. Actually, the form of
Nekhoroshev's theorem used here is not the original one, but that
for degenerate systems proved by Fassò in
\cite{fasso1995hamiltonian}. This is due to the fact that the
Hamiltonian of the central motion is a function of two actions only,
namely, the modulus of the total angular momentum and the action of
the effective one dimensional Hamiltonian describing the motion of the
radial variable.

On the one hand, as pointed out in \cite{fasso1995hamiltonian}, this
feature creates some problems for the proof of Nekhoroshev's theorem,
but these problems were solved in \cite{fasso1995hamiltonian}. On the
other hand, degeneracy reduces the difficulty for the verification of
steepness or of the strongest property of being quasi-convex, since,
in the two-dimensional case, quasi-convexity is generic and equivalent
to the nonvanishing of the Arnold determinant

\begin{equation}
\label{arndet}
\mathcal{D}:=\det\left(\begin{aligned}
&\frac{\partial^2 h_0}{\partial I^2} & \left(\frac{\partial h_0}{\partial I}\right)^t\\
&\frac{\partial h_0}{\partial I}  & 0
\end{aligned}\right)\ ,
\end{equation}
a property that it is not too hard to verify.

\noindent Indeed, since \eqref{arndet} is an analytic function of the actions,
it is enough to show that it is different from zero at one point in
order to ensure that it is different from zero almost everywhere. Here
we explicitly compute the expansion of $h_0(I)$ at a circular orbit
and we show that, provided the central potential $V(r)$ does not
satisfy identically a fourth order differential equation that we
explicitly write, the Hamiltonian $h_0(I)$ is quasi-convex on an open
dense domain (whose complementary is possibly empty).

The rest of the paper is organized as follows: In
sect. \ref{statement} we introduce the central motion problem and
state the main results. Sect. \ref{proof} contains all the proofs. In
the Appendix we prove that in the two dimensional case quasi-convexity
is equivalent to Arnold isoenergetic nondegeneracy condition.

\noindent{\it Acknowledgements.} We thank F. Fass\`o for a detailed
discussion on action angle variables in the central motion problem,
M. Guzzo, L. Niederman and G. Pinzari for pointing to our attention
some relevant references and A. Maspero for interesting discussions.

\section{Preliminaries and statement of the main results}\label{statement}

We first recall the structure of the action angle variables for the
central motion. Introducing polar coordinates, the Hamiltonian takes
the form
\begin{equation}\label{HamCentral}
h_0(p_r,r,p_\phi,\phi,p_\theta,\theta)=\frac{p_r^2}{2}+\frac{p_\theta^2}{2r^2}+\frac{p_\phi^2}{2r^2\sin^2\theta}+V(r) \;,
\end{equation}
and the actions on which $h_0$ depends are
$$I_2:=\sqrt{p_\theta^2+\frac{p_\phi^2}{\sin^2\theta}} \; ,$$
and the action $I_1$ of the effective one dimensional Hamiltonian system
\begin{equation}\label{HamReduced}
h_0^*=\frac{p_r^2}{2}+V_{I_2}^*(r) \; , \quad
V_{I_2}^*(r)=\frac{I_2^2}{2r^2}+V(r) \; .
\end{equation}
 By construction $h_0$ turns out to be a function of the two actions only. We still write
$$h_0=h_0(I_1,I_2) \; . $$ 

According to Fassò's theorem, if $h_0$
 depends on $(I_1,I_2)$ in a steep way, then Nekhoroshev's estimate
 applies. We recall that steepness is actually implied by
 quasi-convexity, the property that we are now going to verify.

\begin{definition}
A function $h_0$ of the actions is said to be \emph{quasi-convex} at a
point $\bar{I}$ if the system
$$\left\lbrace\begin{aligned}
& d h_0(\bar{I})\eta =0\\
&d^2h_0(\bar{I})(\eta,\eta)=0
\end{aligned}\right.$$
admits only trivial solutions. Here we denoted by $d^2h_0(\bar{I})(\eta,\eta)$ the second differential of $h_0$ at $\bar{I}$ applied to the two vectors $\eta,\eta$.
\end{definition}

To define the set $\cA$ in which the actions vary we first assume that
there exists an interval $(r_2,r_1)$ such that, for $r_2<r<r_1$ one has
\begin{align}
\label{nd.1}
V'(r) > 0\ ,
\\
\label{nd.2}
V''(r) + \frac{3 V'(r)}{r} > 0\ .
\end{align}
Then we define 
$$
\Gamma_1:=\sqrt{r_1^3 V'(r_1)}\ ,\quad \Gamma_2:=\sqrt{r_2^3
  V'(r_2)}\ ,
$$ and, in order to fix ideas, we assume that
$\Gamma_1<\Gamma_2$. Then for $\Gamma_1<I_2<\Gamma_2$, the effective
potential $V_{I_2}^*$ has a non degenerate minimum at some
$r_0=r_0(I_2)$.

Then, there exists a curve $E(I_2)$ such that for $h_0^* < E(I_2)$,
all the orbits of the Hamiltonian \eqref{HamReduced} are
periodic. Correspondingly, their action $I_1(E,I_2)$ vary in some
interval $(0, F(I_2))$. Thus, the domain $\mathcal{A}$ of the actions
$I$ has the form
$$\mathcal{A}:=\{ (I_1,I_2): \Gamma_1 < I_2 < \Gamma_2 , \; 0< I_1 <
F(I_2)\} \; .$$ We remark that $\mathcal{A}$ is simply connected, a
property that will play an important role in the following.

Our main result is the following.

\begin{theorem}\label{Teorema1}
Consider the Hamiltonian
$$h_0(p_r,r,p_\phi,\phi,p_\theta,\theta)=\frac{p_r^2}{2}+\frac{p_\theta^2}{2r^2}+\frac{p_\phi^2}{2r^2\sin^2\theta}+V(r)
\; ,$$ with $V(r)$ analytic on $\mathbb{R}^3\smallsetminus
\{0\}$. Assume that there exists a value $r_0\in (r_2,r_1)$ of the
radius such that the following fourth order equation
\begin{equation}
\label{PotCondition}
\begin{aligned}
V^{(4)}(r_0)=&-\frac{84 V'(r_0)}{r_0^3}+\frac{32 V''(r_0)}{r_0^2}+\frac{16V'''(r_0)}{r_0}-\frac{8(V''(r_0))^2}{r_0V'(r_0)}\\
&+\frac{240(V'(r_0))^2}{r_0^3(3V'(r_0)+r_0V''(r_0))}-\frac{40V'(r_0)V'''(r_0)}{r_0(3V'(r_0)+r_0V''(r_0))}\\
&+\frac{5r_0(V'''(r_0))^2}{3(3V'(r_0)+r_0V''(r_0))}
\end{aligned}
\end{equation}
is \emph{not} satisfied.

Then, there exists a subset $\mathcal{S}\subset \mathcal{A}$ of the
action space, with the property that its intersection with any compact
set is composed by at most a finite number of lines, and such that $h_0$
restricted to $\mathcal{A}\smallsetminus \mathcal{S}$ is
quasi-convex.
\end{theorem}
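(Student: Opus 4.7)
The plan rests on the equivalence, proved in the Appendix, between quasi-convexity of $h_0$ in two degrees of freedom and the non-vanishing of the Arnold determinant $\mathcal{D}$ defined in \eqref{arndet}. Since $h_0$ is analytic in $(I_1,I_2)$ on the connected open set $\mathcal{A}$, so is $\mathcal{D}$, and the zero locus of a not identically zero real-analytic function of two variables is a locally finite union of analytic curves. It is therefore enough to exhibit a single point of $\overline{\mathcal{A}}$ at which $\mathcal{D}$ does not vanish; the exceptional set $\mathcal{S}:=\{\mathcal{D}=0\}\cap\mathcal{A}$ will then automatically have the structure claimed, and $h_0$ will be quasi-convex on $\mathcal{A}\setminus\mathcal{S}$.

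The natural place to carry out the computation is the boundary stratum $I_1=0$, which corresponds to the circular orbits. There the particle sits at the minimum $r_0=r_0(I_2)$ of the effective potential $V_{I_2}^*$, characterized implicitly by $I_2^2=r_0^3 V'(r_0)$; under \eqref{nd.1}--\eqref{nd.2} this defines an analytic monotone diffeomorphism $(r_2,r_1)\to(\Gamma_1,\Gamma_2)$, so that the circular-orbit radius $r_0$ and the action $I_2$ are interchangeable parameters. The next step is to apply the standard Birkhoff normal form to the radial anharmonic oscillator $h_0^*$ about the equilibrium $r_0$: setting $c_k(I_2):=(V_{I_2}^*)^{(k)}(r_0)$ (which is an explicit polynomial in $V^{(j)}(r_0)$ and $r_0^{-1}$ once $I_2^2=r_0^3V'(r_0)$ is inserted), one obtains
\begin{equation*}
h_0(I_1,I_2)=e_0(I_2)+\omega_r(I_2)\,I_1+a(I_2)\,I_1^2+O(I_1^3),
\end{equation*}
with $e_0(I_2)=V_{I_2}^*(r_0)$, $\omega_r(I_2)=\sqrt{c_2(I_2)}$, and $a(I_2)$ the classical rational expression in $c_2,c_3,c_4$ produced by Birkhoff's algorithm.

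From this expansion I would read off all the first and second partials of $h_0$ at $I_1=0$: namely $\partial_{I_1}h_0=\omega_r$, $\partial_{I_2}h_0=e_0'(I_2)=I_2/r_0^2$ (the angular frequency of the circular orbit), and $\partial_{I_1}^2 h_0=2a$, together with $\partial_{I_1}\partial_{I_2}h_0=\omega_r'$ and $\partial_{I_2}^2h_0=e_0''$, the last two computed by differentiating in $I_2$ through the implicit dependence of $r_0$. Substituting into \eqref{arndet} and simplifying, one expects a formula of the shape
\begin{equation*}
\mathcal{D}|_{I_1=0}(I_2)=K(r_0)\,\Bigl[V^{(4)}(r_0)-\Phi\bigl(V'(r_0),V''(r_0),V'''(r_0),r_0\bigr)\Bigr],
\end{equation*}
where $K(r_0)\neq 0$ on $(r_2,r_1)$ and $\Phi$ is precisely the right-hand side of \eqref{PotCondition}. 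The hypothesis of the theorem then forces $\mathcal{D}|_{I_1=0}\not\equiv 0$, and analyticity concludes the argument.

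The hard part is purely computational: running the Birkhoff normal form to the right order to express $a(I_2)$ as an explicit rational function of $c_2,c_3,c_4$; handling the implicit $I_2$-dependence of $r_0$ (and hence of the $c_k$) when taking second derivatives in $I_2$; and, finally, collapsing the $3\times 3$ determinant \eqref{arndet} into the exact combination that appears on the right-hand side of \eqref{PotCondition}. No new idea is required beyond this bookkeeping, which is the natural source of possible errors.
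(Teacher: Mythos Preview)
Your proposal is correct and follows essentially the same route as the paper: reduce quasi-convexity to the non-vanishing of the Arnold determinant $\mathcal{D}$, compute $h_0(I_1,I_2)$ to second order in $I_1$ via the Birkhoff normal form of the radial system at the circular orbit, evaluate $\mathcal{D}$ at $I_1=0$, and show that $\mathcal{D}|_{I_1=0}=0$ is exactly \eqref{PotCondition} once $I_2^2=r_0^3V'(r_0)$ and $\partial r_0/\partial I_2$ are substituted. The only point the paper makes more explicit than you do is that, since $I_1=0$ lies on $\partial\mathcal{A}$, one must observe that the Birkhoff normal form gives $h_0$ (and hence $\mathcal{D}$) as an analytic function of $I_1$ in a full complex neighborhood of $0$, so that $\mathcal{D}|_{I_1=0}\neq 0$ indeed forces $\mathcal{D}\not\equiv 0$ on $\mathcal{A}$.
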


\begin{rem}
The fourth order equation \eqref{PotCondition} can be rewritten as a
second order ordinary differential equation in terms of the variable
$g(r)=\frac{r V''(r)}{V'(r)}$, namely,
\begin{equation}
\label{SecondOrd}
\begin{aligned}
g''(r_0)=&\frac{(14+g(r_0))g'(r_0)}{3r_0}+\frac{(g(r_0)-1)(g(r_0)+2)(g(r_0)+3)}{3r_0^2}\\
&+\frac{5g'(r_0)^2}{3(3+g(r_0))} 
\end{aligned}
\end{equation}
\end{rem}

\begin{rem}
It is interesting to see what are the homogeneous potentials which do
not give rise to steep Hamiltonians. Thus take 
 $$V(r)=k r^\alpha\; ,$$
 with $\alpha, k\in\mathbb{R}$, then the assumptions of Theorem
 \ref{Teorema1} are fulfilled if
\begin{equation}
\label{nd.3}
k\alpha>0\ ,\ \quad \alpha+2>0\ ,\quad \alpha\not=-1,2\ ,
\end{equation}
thus the excluded cases are the Kepler and the Harmonic potentials.

We also remark that the equation \eqref{PotCondition} has also the
solution $\alpha=-2$, which however is excluded by the second of
\eqref{nd.3}. 
\end{rem}

\cambi{The theory of \cite{nekhoroshev1972action} and
  \cite{fasso1995hamiltonian} is needed in order to apply Fass\`o's
  version of Nekhoroshev's theorem. We recall that the theory of
  \cite{nekhoroshev1972action,fasso1995hamiltonian} applyes provided
  $\cA$ is simply connected and $\forall a\in\cA$, the set $I^{-1}(a)$
  is compact. This second property follows from the remark that in our
  case
\begin{equation}
\label{livello}
I^{-1}(a)=\left\{(p_r,r,p_\phi,\phi,p_\theta,\theta) :
h_0^*(p_r,r,p_\phi,\phi,p_\theta,\theta)\leq h_0(a_1,a_2)
 \right\}\ ,
\end{equation}
where $h_0^*$ is constructed using $I_2:=a_2$. Then the set
 \eqref{livello} is obviously compact. It follows from
 \cite{nekhoroshev1972action,fasso1995hamiltonian} that the
 set
$$ {I}^{-1}(\mathcal{A})
:=\{(p_r,r,p_\phi,\phi,p_\theta,\theta)\ :\ {I}(p_r,r,p_\phi,\phi,p_\theta,\theta)
\in \mathcal{A} \}$$ can be covered by charts defining generalized
action angle coordinates of the form $(I,\alpha,x,y)$ with $\alpha
\in\mathbb{T}^2$.
}

\begin{definition}
Fix a positive parameter $\rho$ and denote by $B_\rho(I)\subset
\mathbb{R}^2$ the open ball of radius $\rho$ and center $I$ and define
\begin{equation}
\label{Set}
\mathcal{S}_\rho :=\cup_{I\in \mathcal{S}} B_\rho(I)\ .
\end{equation}
\end{definition}

We now consider a small perturbation $\varepsilon f$ of $h_0$, with
$f$ a function of the original cartesian coordinates $(p,q)$ in $T
\mathbb{R}^3 \simeq \mathbb{R}^6$ which is analytic.

\begin{theorem}\label{Teorema2}
Fix a positive small parameter $\rho$ and consider the Hamiltonian $h:=h_0 + \varepsilon f$.
Then, for every compact set $\mathcal{K}\subset \mathcal{A}$, there exist positive constants $\varepsilon^*,\mathcal{C}_1,\mathcal{C}_2,\mathcal{C}_3$ such that if the initial value $I_0$ of the actions fulfills $I_0\in \mathcal{K}\smallsetminus \mathcal{S}_\rho$ and $\lvert \varepsilon\rvert < \varepsilon^* $ one has
$$\lvert I(t)-I_0\rvert \leq \mathcal{C}_1\varepsilon^{1/4} \; ,$$
for all times $t$ satisfying
\begin{equation}\label{times}
\lvert t \rvert \leq \mathcal{C}_3 exp(\mathcal{C}_2\varepsilon^{-1/4}) \; .
\end{equation}
\end{theorem}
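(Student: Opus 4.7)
The plan is to derive Theorem \ref{Teorema2} by a direct application of the degenerate Nekhoroshev theorem of Fass\`o \cite{fasso1995hamiltonian}, whose hypotheses have already been largely arranged. Geometrically, the preparation is complete: $\mathcal{A}$ is simply connected, the sets ${I}^{-1}(a)$ are compact by \eqref{livello}, the corresponding generalized action-angle variables $(I,\alpha,x,y)$ with $\alpha\in\mathbb{T}^2$ cover ${I}^{-1}(\mathcal{A})$, and by Theorem \ref{Teorema1} the integrable Hamiltonian $h_0(I)$ is quasi-convex on $\mathcal{A}\setminus\mathcal{S}$. There remain two quantitative inputs to produce: a uniform quasi-convexity constant on the relevant compact set, and a uniform analyticity radius for the perturbation expressed in the new variables.

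First I would fix $\mathcal{K}\subset\mathcal{A}$ compact and consider $\mathcal{K}_\rho:=\mathcal{K}\setminus\mathcal{S}_\rho$. By the description of $\mathcal{S}$ in Theorem \ref{Teorema1}, $\mathcal{K}_\rho$ is a compact subset of the open region on which quasi-convexity holds. Since the Arnold determinant \eqref{arndet} and the gradient of $h_0$ depend continuously on $I$ and are non-vanishing on $\mathcal{K}_\rho$, a standard compactness argument upgrades pointwise quasi-convexity to a \emph{uniform} quasi-convexity estimate on $\mathcal{K}_\rho$, with constants depending only on $\mathcal{K}$ and $\rho$. This is the first input required by Fass\`o's theorem.

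Next I would verify that the perturbation $\varepsilon f$, analytic in the cartesian variables $(p,q)$, pulls back to a function analytic on a complex neighborhood of ${I}^{-1}(\mathcal{K}_\rho)$ of size independent of $\varepsilon$. Because $\mathcal{K}_\rho$ is bounded away from the boundary of $\mathcal{A}$ (the circular-orbit locus $I_1=0$, the separatrix $I_1=F(I_2)$, and the endpoints $I_2=\Gamma_1,\Gamma_2$) where the action-angle map degenerates, the transformation $(p,q)\leftrightarrow(I,\alpha,x,y)$ is real-analytic over ${I}^{-1}(\mathcal{K}_\rho)$ and extends to a uniform complex strip. Working chart by chart on the cover produced in \cite{nekhoroshev1972action,fasso1995hamiltonian}, and using that the actions $I$ are globally well defined, one gets a uniform analyticity radius of the pulled-back $\tilde f$ in the new coordinates.

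Finally I would invoke Fass\`o's theorem with $n=2$ non-degenerate actions. For quasi-convex integrable Hamiltonians the Nekhoroshev exponents are $a=b=1/(2n)=1/4$; the theorem then supplies constants $\varepsilon^*,\mathcal{C}_1,\mathcal{C}_2,\mathcal{C}_3$ depending on $\mathcal{K}$ and $\rho$ only through the uniform quasi-convexity constant and the analyticity radius determined above, such that for $|\varepsilon|<\varepsilon^*$ and $I_0\in\mathcal{K}_\rho$ one has $|I(t)-I_0|\leq\mathcal{C}_1\varepsilon^{1/4}$ for all $|t|\leq\mathcal{C}_3\exp(\mathcal{C}_2\varepsilon^{-1/4})$, which is precisely \eqref{times}. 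The main obstacle is really the second step: keeping the analyticity radius of $\tilde f$ uniformly bounded below over $\mathcal{K}_\rho$; this is exactly what forces one to excise the $\rho$-neighborhood $\mathcal{S}_\rho$, not just $\mathcal{S}$ itself, in the statement of the theorem.
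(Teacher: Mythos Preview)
Your approach is essentially the same as the paper's: establish uniform quasi-convexity on $\mathcal{K}\setminus\mathcal{S}_\rho$ by compactness, then invoke Fass\`o's degenerate Nekhoroshev theorem, noting that the quasi-convex exponents for two actions are $1/4$. The paper's own proof is just these two sentences; your version merely fills in the analyticity bookkeeping that the paper leaves implicit.

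One small correction to your commentary: the reason for excising $\mathcal{S}_\rho$ rather than just $\mathcal{S}$ is \emph{not} the analyticity radius of the pulled-back perturbation. Since $\mathcal{K}$ is already a compact subset of the open set $\mathcal{A}$, it is automatically bounded away from the boundary $\partial\mathcal{A}$ where the action-angle map degenerates; removing $\mathcal{S}_\rho$ plays no role there. The set $\mathcal{S}$ lies in the interior of $\mathcal{A}$ and the action-angle coordinates are perfectly regular across it. What fails on $\mathcal{S}$ is quasi-convexity (the Arnold determinant vanishes), and near $\mathcal{S}$ the quasi-convexity constant degenerates. Excising the $\rho$-neighborhood is what makes your Step~1, not Step~2, go through with a uniform constant. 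So the ``main obstacle'' you identify at the end is misplaced; otherwise the argument is correct.
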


\begin{rem}
The main dynamical consequence is that, as in the central motion, for any initial datum as above, there exist $r_m,r_M$ such that 
$$r_m \leq r \leq r_M$$
for the times \eqref{times}.
\end{rem}

\begin{rem}
Actually, one can weaker the analyticity requirement on $f$, since it
would be enough to have that it is analytic on the set of $p$ and
$q$'s such that the action is close to $I_0$.
\end{rem}

\begin{rem}
The Theorem holds also if one couples the system to an other system with a dynamic taking place over a much faster time scale. For example, this occurs in the case of a soliton interacting with radiation in the NLS equation as in \cite{bambusi2016freezing}.
\end{rem}

\section{Proof of Theorem \ref{Teorema1}}\label{proof}

As anticipated above, in the case of \emph{two} actions,
quasi-convexity is equivalent to the nonvanishing of the Arnold
determinant $\mathcal{D}$ (cf. \eqref{arndet}).

In order to compute $\mathcal{D}$, we have to compute
$h_0(I_1,I_2)$. To do this we proceed as follows.

First, as explained in sect. \ref{statement}, it is easy to introduce
the action $I_2$ which coincides with the modulus of the total angular
momentum. Then, $I_1$ is the action of the effective one dimensional
system \eqref{HamReduced} in which $I_2$ plays the role of a
parameter. In order to have an explicit formula for the Hamiltonian as
a function of the actions, we work at circular orbits. Precisely, we
exploit the remark that in one dimensional systems Birkhoff normal
form converges in a neighborhood of a nondegenerate minimum. Indeed, Birkhoff
normal form allows to construct an analytic canonical transformation
which, in a neighborhood of the critical point, conjugates the
Hamiltonian $h_0$ to a function of the form
\begin{equation}
\label{hi}
h_0\left(\frac{p_r^2+r^2}{2}, I_2 \right) \; ,
\end{equation}
which moreover is explicitly constructed as a power series in
$\frac{p_r^2+r^2}{2}$. Thus, one can define the first action $I_1$ by
$I_1:=\frac{p_r^2+r^2}{2} $. Remark that, since \eqref{hi}, as a
function of $(p_r,r)$ is analytic in a whole \emph{complex}
neighborhood of zero, then $h_0(I_1,I_2)$ is analytic for $I_1$ in a
whole neighborhood of zero. Then, from uniqueness of the actions in
one dimensional systems, one has that the expression one gets is
actually the expression of $h_0$ as a function of the actions as
defined in sect. \ref{statement}. It also follows that
$\mathcal{D}(I_1,I_2)$ as a function of $I_1$ extends to a complex
analytic function in whole neighborhood of zero, and such a function
can be computed using the expression of $h_0$ obtained from
\eqref{hi}.

We use this remark in order to compute the Arnold determinant at the
equilibrium point $r_0$ of the effective one dimensional system
described by $h_0$, which coincides with a circular orbit of the
original system. This can be done by computing explicitly the second
order Taylor expansion of $h_0$ which coincides with the fourth order
Birkhoff normal form of $h_0^*$ at $r_0$. Actually this was already
done in \cite{fejoz2004theoreme} getting
\begin{equation}\label{HamExp}
h(I_1,I_2)=V^*(I_2)+\sqrt{A(I_2)}I_1+\frac{-5B(I_2)^2+3C(I_2)A(I_2)}{48A(I_2)^2}I_1^2+o(I_1^2)\; ,
\end{equation}
where
$$\begin{aligned}
&V^*(I_2)=\frac{I_2^2}{2r_0^2}+V(r_0) 	\; , &A(I_2)=\frac{3 I_2^2}{r_0^4}+V''(r_0) \; ,\\
&B(I_2)=-\frac{12 I_2^2}{r_0^5}+V'''(r_0) \; , & C(I_2)=\frac{60 I_2^2}{r_0^6}+V^{(4)}(r_0)\; .
\end{aligned}$$

Fix a point $I_2^*$ and let $r_0(I_2^*)$ be the corresponding critical
point of the effective potential.

Inserting into the Arnold determinant $\mathcal{D}$, the first and second derivatives of the Hamiltonian \eqref{HamExp} evaluated at the point $I^*:=(0,I_2^*)$,  one gets
$$\mathcal{D}=\det\left( \begin{matrix}
t(I_2^*)   &     \frac{1}{2\sqrt{A(I_2^*)}}\CD{\frac{6I_2^*}{r_0^4}+B(I_2^*)\frac{\partial r_0}{\partial I_2}}   &      &\sqrt{A(I_2^*)}\\
 \frac{1}{2\sqrt{A(I_2^*)}}\CD{\frac{6I_2^*}{r_0^4}+B(I_2^*)\frac{\partial r_0}{\partial I_2}}     &   \frac{1}{r_0^2}-\frac{2(I_2^*)^2}{r_0^3}\cdot\frac{\partial r_0}{\partial I_2}            &      &\frac{I_2^*}{r_0^2}\\
\sqrt{A(I_2^*)}                                                        &           \frac{I_2^*}{r_0^2}                                               &     &                      0
\end{matrix}\right) \; ,$$
where we denoted by 
$$t(I_2^*)=\frac{-5B(I_2^*)^2+3C(I_2^*)A(I_2^*)}{24A(I_2^*)^2} \; .$$
Thus, $\mathcal{D}=0$ is equivalent to
$$\frac{6(I_2^*)^2}{r_0^6}+\CD{\frac{BI_2^*}{r_0^2}+\frac{2AI_2^*}{r_0^3}}\frac{\partial r_0}{\partial I_2}-\frac{A}{r_0^2}-\frac{t(I_2^*)^2}{r_0^4}=0 \; .$$
Isolating the fourth order derivative of $V$ one gets
\begin{equation}\label{Pot}
V^{(4)}(r_0)=\frac{48A}{r_0^2}+\CD{\frac{8ABr_0^2}{I_2^*}+\frac{16A^2r_0}{I_2^*}}\cdot \frac{\partial r_0}{\partial I_2}-\frac{8A^2r_0^2}{(I_2^*)^2}+\frac{5B^2}{3A}-\frac{60(I_2^*)^2}{r_0^6} \; .
\end{equation}
Using that $r_0$ is a critical point of $V^*$, one can express $I_2^*$ in terms of $r_0$, namely,
$$I_2(r_0)=\sqrt{r_0^3V'(r_0)} \; ,$$
and computing
$$\frac{\partial r_0}{\partial I_2}=\frac{1}{\frac{\partial I_2}{\partial r_0}}=\frac{2(r_0^3V'(r_0))^{1/2}}{r_0^2\CD{3V'(r_0)+r_0V''(r_0)}} \; ,$$
one can rewrite the equation \eqref{Pot} in terms of the radius $r_0$, obtaining the equation \eqref{PotCondition}. Thus, if there exists $r_0$ such that \eqref{PotCondition} is not satisfied, then at this point $\mathcal{D}\neq 0$. This concludes the proof of the Theorem.
\begin{flushright}
$\square$
\end{flushright}

\noindent\textbf{Proof of Theorem \ref{Teorema2}}
From the previous result, one gets that the Hamiltonian is quasi-convex on the subset $\mathcal{K}\smallsetminus \mathcal{S}_\rho$ with uniform bounds on the quasi-convexity constants. Thus, if we choose an initial datum in such a set and a sufficiently small parameter $\varepsilon$, then Fassò's version of the Nekhoroshev theorem for degenerate Hamiltonians applies (see \cite{fasso1995hamiltonian} for details).

Furthermore, we remark that, in the case of quasi-convex Hamiltonians, one gets optimal {Nekhoroshev's exponents} which for two dimensional systems are $1/4$.
\begin{flushright}
$\square$
\end{flushright}

\section{Appendix}

In this appendix we show that, in the two dimensional case, quasi-convexity is equivalent to the nonvanishing of the Arnold determinant $\mathcal{D}$. We start by recalling the Arnold condition.
\begin{definition}
Let $h_0$ be a complete integrable Hamiltonian with $n$ degrees of freedom and frequency $\omega$. Then, $h_0$ is said to satisfy the \emph{Arnold condition} at $I^*$ if the following map
$$(I,\lambda)\rightarrow (\lambda\omega(I),h_0(I))$$
has maximal rank at $(I^*,1)$.
\end{definition}
Explicitly, this condition can be written in the form
$$\mathcal{D}(I^*)=\det \left (  
\begin{aligned}
&\frac{\partial \omega(I^*)}{\partial I}  &  \frac{\partial h_0(I^*)}{\partial I}\\
&\omega(I^*)                                           &  0
\end{aligned} \right) \neq 0 \; .$$

\begin{proposition}
Let $h_0: \mathcal{A} \rightarrow \mathbb{R}$ with $\mathcal{A}\subset \mathbb{R}^2$ be an Hamiltonian. Then, $h_0$ is quasi-convex at $I^*\in\mathcal{A}$ if and only if $\mathcal{D}(I^*)\neq 0$.
\end{proposition}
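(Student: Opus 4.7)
The plan is to reduce both sides of the claimed equivalence to the same algebraic condition on the frequency vector $\omega := \nabla h_0(I^*)$ and the Hessian $H := d^2 h_0(I^*)$, viewed as a symmetric $2\times 2$ matrix. First I would compute $\mathcal{D}(I^*)$ by expanding the $3\times 3$ determinant along its last row (or column), exploiting the zero in the bottom-right entry to reduce the calculation to two $2\times 2$ minors. A short calculation yields
$$\mathcal{D}(I^*) \;=\; -\bigl(H_{11}\omega_2^2 - 2 H_{12}\omega_1\omega_2 + H_{22}\omega_1^2\bigr) \;=\; -\,v^{t} H v,$$
where $v := (-\omega_2,\omega_1)$ is the canonical vector orthogonal to $\omega$ in $\mathbb{R}^2$.

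Next I would unpack the quasi-convexity condition directly. Assume $\omega \neq 0$ (the case $\omega = 0$ is degenerate: the bottom row and right column of the Arnold matrix vanish, so $\mathcal{D}(I^*) = 0$ tautologically, and such points are outside the range of Nekhoroshev's theorem anyway). Then the linear equation $dh_0(I^*)\eta = \omega_1\eta_1 + \omega_2\eta_2 = 0$ cuts out a one-dimensional subspace of $\mathbb{R}^2$, spanned precisely by $v$. Thus every solution of the first condition has the form $\eta = t\, v$ for some $t \in \mathbb{R}$, and the second condition $d^2 h_0(I^*)(\eta,\eta) = 0$ collapses to the scalar equation $t^2\, v^t H v = 0$.

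Combining these two observations, a nontrivial solution $\eta \neq 0$ of the quasi-convexity system exists if and only if $v^{t} H v = 0$, which by the first step is equivalent to $\mathcal{D}(I^*) = 0$. Negating yields the proposition. The argument is essentially a one-line calculation and presents no serious obstacle; the only point worth flagging is that the coincidence relies crucially on $\dim \mathcal{A} = 2$, since it is only there that the hyperplane $\{dh_0(I^*)\eta = 0\}$ is one-dimensional and admits an explicit spanning vector $v$ expressible via $\omega$. In higher dimension the kernel of $dh_0$ has dimension $n-1$, and the restriction of $H$ to it is no longer captured by a single scalar, so the equivalence with the nonvanishing of a single determinant breaks down.
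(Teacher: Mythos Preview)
Your argument is correct and follows essentially the same route as the paper: expand the bordered determinant to obtain (up to sign) the quadratic form $H_{11}\omega_2^2 - 2H_{12}\omega_1\omega_2 + H_{22}\omega_1^2$, and recognise this as $d^2h_0(I^*)(\eta,\eta)$ for $\eta$ spanning the one-dimensional kernel of $dh_0(I^*)$. You are slightly more thorough in flagging the degenerate case $\omega=0$ and in explaining why the equivalence is peculiar to dimension two, but the core computation is identical.
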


\begin{proof}
In the two dimensional case, $\mathcal{D}\neq 0$ takes the form
$$\omega_1\CD{\frac{\partial^2 h_0}{\partial I_1 \partial I_2}\omega_2 - \frac{\partial^2 h_0}{\partial I_2^2}\omega_1}-\omega_2\CD{\frac{\partial^2 h_0}{\partial I_1^2}\omega_2-\frac{\partial^2h_0}{\partial I_1\partial I_2}\omega_1}\neq 0 \; ,$$
namely,
$$\frac{\partial^2 h_0}{\partial I_1^2}\omega_2^2-2\frac{\partial^2 h_0}{\partial I_1\partial I_2}\omega_1\omega_2+\frac{\partial^2 h_0}{\partial I_2^2}\omega_1^2\neq 0$$ 
where all the quantities are evaluated at the point $I^*$.

Moreover, this condition can be explicitly written as 
$$d^2h_0(I^*)(\eta,\eta) \neq 0 \; ,$$
where we denoted by $\eta=(\omega_2, -\omega_1)$.

Thus, we conclude that, in the case $n=2$, the Arnold condition is equivalent to the request of the second differential $d^2h_0(I^*)$ to be different from zero on the hyperplane generated by the vector $\eta$ normal to the gradient, namely, quasi-convexity.
\end{proof}


\end{document}